%% file: main.tex
\documentclass[journal]{IEEEtran}
\usepackage{amsmath,amsfonts,amssymb}
\usepackage{cite}
\usepackage{bm}
\usepackage{acronym}
\usepackage{graphicx}
\usepackage{orcidlink}

\newtheorem{proposition}{Proposition}
\makeatletter
\def\@begintheorem#1#2{\trivlist\item[\hskip\labelsep{\bfseries #1\ #2:}]\itshape}
\def\@opargbegintheorem#1#2#3{\trivlist\item[\hskip\labelsep{\bfseries #1\ #2\ (#3):}]\itshape}
\makeatother
\newtheorem{remark}{Remark}
\newtheorem{lemma}{Lemma}

\newcommand{\kb}{\bar{\mathbf k}}
\newcommand{\ub}{\bar{\mathbf u}}
\newcommand{\sbb}{\bar{\mathbf s}}
\newcommand{\rb}{\bar{\mathbf r}}
\newcommand{\Hc}{\bm{\mathcal H}}
\newcommand{\Fk}{\bm{\mathcal F}}
\newcommand{\I}{\mathbf I_3}
\newcommand{\Zp}{\mathbf Z_{\!\perp}}
\renewcommand{\Re}{\mathfrak{Re}}
\renewcommand{\Im}{\mathfrak{Im}}

\input{acronyms}

\begin{document}

\title{Wavenumber-Domain Virtual Arrays for\\ Holographic Near-Field Localization}

\author{Giovanni Iacovelli$^{\orcidlink{0000-0002-3551-4584}}$,~\IEEEmembership{Member,~IEEE}, Chandan~Kumar~Sheemar$^{\orcidlink{0000-0003-1676-5983}}$, and Symeon Chatzinotas$^{\orcidlink{0000-0001-5122-0001}}$,~\IEEEmembership{Fellow,~IEEE}
\thanks{G. Iacovelli, C.K. Sheemar, and S. Chatzinotas are with the Signal Processing and Communications (SIGCOM) Research Group at Interdisciplinary Centre for Security, Reliability and Trust (SnT), University of Luxembourg, 1855 Luxembourg City, Luxembourg (emails: giovanni.iacovelli@uni.lu, chandankumar.sheemar@uni.lu, symeon.chatzinotas@uni.lu).}}

\maketitle

\begin{abstract}
Monostatic localization of multiple point targets is studied for a
holographic aperture operated through wavenumber-domain modes. A
specular-point condition delimits the validity of the spectral model as a
near-field approximation. A single snapshot observes a projection of
dimension at most the target count times the polarization components, while
invertible coding recovers the full channel and places the decoded data on
the difference lattice of transmit and receive wavenumbers. Rank conditions
settle identifiability, and the Fisher matrix reduces to a covariance over
the lattice, dictating a nested mode selection that attains full-aperture
resolution with only tens of RF chains.
\end{abstract}

\begin{IEEEkeywords}
Holographic MIMO, continuous-aperture array, near-field sensing, virtual array, Cram\'er--Rao bound.
\end{IEEEkeywords}

\section{Introduction}
\IEEEPARstart{H}{olographic} Multiple-Input Multiple-Output (MIMO) surfaces and
continuous-aperture arrays (CAPAs) synthesize a quasi-continuous current over an
electrically large aperture \cite{liu2025capa}. Targets of
interest then routinely lie in the radiating near field, where wavefront
curvature enables joint angle and range estimation from a single surface.
Sensing with such apertures has been studied mainly through Cram\'er--Rao
bounds (CRBs), most recently for passive monostatic operation with the
unknown reflectivities profiled out of the likelihood and the continuous
source current optimized \cite{jiang2024crb}, including a wavenumber-domain
reformulation \cite{jiang2025wd}. These works assume a single simultaneous
excitation and localize by exhaustive likelihood search.

A parallel, mature literature builds virtual arrays in the
element-position domain. Colocated MIMO radars separate transmit contributions
by orthogonal waveforms to synthesize $NM$ virtual elements from $N{+}M$
physical ones \cite{stoica2007probing}, and sparse geometries such as nested
arrays achieve $\mathcal O(N^2)$ degrees of freedom from $\mathcal O(N)$
sensors through the difference coarray \cite{pal2010nested}. Two obstacles block a naive transfer to holographic near-field sensing:
single-snapshot operation is degenerate in MIMO radar
\cite{hassanien2015single}, and spherical wavefronts break the far-field
phase additivity on which element-domain virtual arrays rest, to the point
that recent tutorials state the construction fails in the near field
\cite{zeng2025tutorial}. Computational microwave imaging probes scenes
with sequences of aperture modes \cite{sleasman2017single}, but with random
patterns and linear image reconstruction rather than designed mode sets
and parametric localization.

\emph{Contribution.} This letter shows that the virtual-array program survives
the near field if it is carried out in the wavenumber domain of the
aperture rather than in element positions. (i)~A specular-point condition
delimits the validity of the spectral model and identifies the admissible
modal support as a near-field cone (Sec.~\ref{sec:spec}). (ii)~A rank-$3L$ collapse lemma quantifies what a single snapshot loses,
and any invertible coding removes it, placing the decoded data on the
wavenumber-domain difference lattice (Sec.~\ref{sec:virtual}).
(iii)~Identifiability of the joint position and reflectivity estimation
is settled by exact rank conditions, and the profiled Fisher matrix
reduces to a covariance over the lattice, from which the nested selection
and the chains-versus-snapshots rule follow, sustaining an acquisition and
refinement pipeline
(Secs.~\ref{sec:crlb}, \ref{sec:est}).

\section{Electromagnetic Sensing Model}\label{sec:model}
We consider a narrowband communication system where a planar holographic surface
$\mathcal S=[-\mathcal S_x/2,\mathcal S_x/2]\times[-\mathcal S_y/2,\mathcal S_y/2]\subset\{z=0\}$, of area
$|\mathcal S|=\mathcal S_x\mathcal S_y$, illuminates $L$ point targets at
$\mathbf u_\ell=[\ub_\ell^{\mathsf T},z_\ell]^{\mathsf T}$, $z_\ell>0$, $\ell=1,\dots,L$, and collects the backscatter in a monostatic fashion\footnote{Echo
separation is idealized: transmit and receive may be distinct co-located
apertures with negligible spacing, the architecture of metasurface imagers,
where separate panels transmit and collect the return
\cite{sleasman2017single}.}. The time convention is
$e^{-\jmath2\pi ft}$, so outgoing waves carry $e^{+\jmath\kappa R}$ with
$\kappa=2\pi/\lambda$.

\subsection{Probing and reception}
As in metasurface-enabled holographic architectures \cite{sleasman2017single}, the
transmitter synthesizes a monochromatic current density
$\mathbf j(\mathbf s)\in\mathbb C^3$ $[\mathrm{A/m^2}]$ on $\mathcal S$,
decomposed over an orthonormal basis
$\{\bm\phi_n\}_{n=1}^N$ as
$\mathbf j=\sum_n q_n\bm\phi_n$ with
$\int_{\mathcal S}\bm\phi_n^{\mathsf H}\bm\phi_{n'}=\delta_{nn'}$ and probing
coefficients $\mathbf q\in\mathbb C^N$. The radiated
field is
$\mathbf e(\mathbf r)=\jmath\kappa Z_0\int_{\mathcal S}\mathbf G(\mathbf
r,\mathbf s)\mathbf j(\mathbf s)\,\mathrm d\mathbf s$ $[\mathrm{V/m}]$, with the
dyadic Green function
\begin{equation}
\mathbf G(\mathbf r,\mathbf s)=\Big(\I+\tfrac{1}{\kappa^2}\nabla_{\mathbf r}
\nabla_{\mathbf r}^{\mathsf T}\Big)\frac{e^{\jmath\kappa\|\mathbf r-\mathbf s\|}}
{4\pi\|\mathbf r-\mathbf s\|}\in\mathbb C^{3\times3}.
\label{eq:dyadic}
\end{equation}
$\mathbf G$ is symmetric in its arguments, and by reciprocity the
return-path kernel is $\mathbf G$ itself. Target $\ell$ scatters with unknown complex reflectivity
$\gamma_\ell$, which absorbs the constants $(\jmath\kappa Z_0)^2$ and the
cross-section. The echo, superposed over all scatterers, is projected on
an orthonormal receive family $\{\bm\psi_m\}_{m=1}^M$, giving
\begin{equation}
y_m=\sum_{\ell=1}^{L}\gamma_\ell\,\bm\beta_{m,\ell}^{\mathsf H}\bm\alpha_\ell+n_m,
\qquad n_m\sim\mathcal{CN}(0,\sigma^2),
\label{eq:ym}
\end{equation}
\begin{equation}
\bm\alpha_\ell=\!\int_{\mathcal S}\!\mathbf G(\mathbf u_\ell,\mathbf s)\mathbf
j(\mathbf s)\,\mathrm d\mathbf s,\;\;
\bm\beta_{m,\ell}^{\mathsf H}=\!\int_{\mathcal S}\!\bm\psi_m^{\mathsf H}(\mathbf r)
\mathbf G(\mathbf u_\ell,\mathbf r)\,\mathrm d\mathbf r,
\label{eq:alphab}
\end{equation}
where $\bm\alpha_\ell\in\mathbb C^3$ is the field impinging on target $\ell$ and
$\bm\beta_{m,\ell}^{\mathsf H}$ its coupling into receive mode $m$.

\subsection{Generalized (coded) probing}
Let the surface transmit $I$ snapshots, exciting in snapshot $i$ the
superposition with coefficients $[\mathbf Q]_{ni}$, where
$\mathbf Q\in\mathbb C^{N\times I}$ is a known coding matrix, the scene is
static over the $I$ snapshots, and each snapshot carries independent receive
noise. Collecting the received mode vectors as columns of
$\mathbf Y_{\mathrm{meas}}\in\mathbb C^{M\times I}$,
\begin{equation}
\mathbf Y_{\mathrm{meas}}=\mathbf H(\mathbf u,\bm\gamma)\,\mathbf Q+\mathbf N,
\qquad
H_{mn}=\sum_{\ell=1}^{L}\gamma_\ell\,\bm\beta_{m,\ell}^{\mathsf H}\,\bm\alpha_\ell^{(n)},
\label{eq:coded}
\end{equation}
where $\bm\alpha_\ell^{(n)}$ is the field at target $\ell$ due to unit excitation of
mode $n$ alone. The round-trip channel $\mathbf H\in\mathbb C^{M\times N}$
collects the per-mode-pair responses.\footnote{Single-snapshot
probing is the special case $I=1$, $\mathbf Q=\mathbf q$, and single-mode
(sequential) probing is $\mathbf Q=\mathbf I_N$.}

\section{Spectral Representation and Its Validity}\label{sec:spec}
Take spatial harmonics with tangential polarizations,
$\bm\phi_n(\mathbf s)=\mathbf p_n e^{\jmath\kb_n\cdot\sbb}/\sqrt{|\mathcal S|}$,
$\bm\psi_m(\mathbf r)=\mathbf w_m e^{\jmath\kb_m\cdot\rb}/\sqrt{|\mathcal S|}$,
with lateral wavenumbers on the aperture lattice
$\tfrac{2\pi}{\mathcal S_x}\mathbb Z\times\tfrac{2\pi}{\mathcal S_y}\mathbb Z$, written
through integer index vectors $\mathbf t_n,\mathbf r_m\in\mathbb Z^2$ as
$[\kb_n]_a=2\pi t_{n,a}/\mathcal S_a$ and $[\kb_m]_a=2\pi r_{m,a}/\mathcal S_a$,
$k_z(\kb)=\sqrt{\kappa^2-\|\kb\|^2}$, $\Im\{k_z\}\ge0$, and
$\mathbf k=[\kb^{\mathsf T},k_z]^{\mathsf T}$. The Fourier transform of the
Green function follows from the Weyl identity: for $z>0$,
\begin{equation}
\Fk(\kb;z)\!=\!\int_{\mathbb R^2}\!\mathbf G(\bar{\bm\xi},z)
e^{-\jmath\kb\cdot\bar{\bm\xi}}\mathrm d\bar{\bm\xi}
=\Big(\I-\tfrac{\mathbf k\mathbf k^{\mathsf T}}{\kappa^2}\Big)
\frac{\jmath e^{\jmath k_zz}}{2k_z},
\label{eq:weyl}
\end{equation}
the projector enforcing transversality of the radiated wave, which is the
structure lost under scalarization \cite{jiang2024crb}. Define the mode responses
$\mathbf a_n(z)\triangleq\Fk(\kb_n;z)\mathbf p_n\in\mathbb C^{3}$ and
$\mathbf b_m^{\mathsf H}(z)\triangleq\mathbf w_m^{\mathsf H}\Fk(-\kb_m;z)
\in\mathbb C^{1\times3}$. Replacing the truncated aperture integrals \eqref{eq:alphab} by integrals
over the whole plane, so that \eqref{eq:weyl} applies, yields the spectral
approximation
\begin{equation}
\bm\alpha_\ell^{(n)}\approx\tfrac{1}{\sqrt{|\mathcal S|}}\,
\mathbf a_n(z_\ell)\,e^{\jmath\kb_n\cdot\ub_\ell},\qquad
\bm\beta_{m,\ell}^{\mathsf H}\approx\tfrac{1}{\sqrt{|\mathcal S|}}\,
\mathbf b_m^{\mathsf H}(z_\ell)\,e^{-\jmath\kb_m\cdot\ub_\ell},
\label{eq:spectral}
\end{equation}
whose accuracy is settled by the following result.

\begin{proposition}\label{prop:valid}
For a propagating harmonic $\kb_n$ and target $\mathbf u_\ell$, the integrand of
$\int_{\mathcal S}\mathbf G(\mathbf u_\ell,\mathbf s)e^{\jmath\kb_n\cdot\sbb}
\mathrm d\mathbf s$ has a single stationary-phase point
$\sbb^{\star}_{n,\ell}=\ub_\ell-z_\ell\kb_n/k_{n,z}$, around which contributions add
coherently within the first Fresnel zone, of radius
$r_F=\sqrt{\lambda z_\ell/\cos^3\theta_n}$ with
$\sin\theta_n=\|\kb_n\|/\kappa$. The infinite-plane step behind
\eqref{eq:spectral} is accurate, to leading order in $(\kappa z_\ell)^{-1/2}$,
if and only if this coherent disc lies inside
$\mathcal S$, i.e., componentwise in modulus,
\begin{equation}
\big|u_{\ell,a}-z_\ell\,k_{n,a}/k_{n,z}\big|\;\le\;\mathcal S_a/2-r_F,
\qquad a\in\{x,y\}.
\label{eq:validity}
\end{equation}
For the receive integrals, whose weight is $e^{-\jmath\kb_m\cdot\rb}$, the sign
of the walk-off flips, and symmetric mode sets are covered by the same
condition.
\end{proposition}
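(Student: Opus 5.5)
The plan is a two-dimensional stationary-phase analysis of the aperture integral. Write $\mathbf s=[\sbb^{\mathsf T},0]^{\mathsf T}$ and $R(\sbb)=\sqrt{\|\ub_\ell-\sbb\|^2+z_\ell^2}$. Keep only the leading (radiative) part of the dyadic in \eqref{eq:dyadic}: as $\kappa R\to\infty$, $\mathbf G\simeq(\I-\hat{\mathbf R}\hat{\mathbf R}^{\mathsf T})e^{\jmath\kappa R}/(4\pi R)$, with $\hat{\mathbf R}$ the unit vector from $\mathbf s$ to $\mathbf u_\ell$. The integrand then has the rapidly varying phase
\[
\Phi(\sbb)=\kappa R(\sbb)+\kb_n\cdot\sbb ,
\]
multiplied by a slowly varying amplitude. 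The $\mathcal O(1/(\kappa R)^2)$ terms of $\mathbf G$ change only the amplitude at higher order, so they do not move the stationary point.

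\emph{Step 1: locate and count stationary points.} Setting $\nabla\Phi=\mathbf 0$ gives
\[
\kappa(\sbb-\ub_\ell)/R=-\kb_n .
\]
Write $\boldsymbol\rho=\sbb-\ub_\ell$. Taking norms gives $\|\boldsymbol\rho\|/R=\sin\theta_n<1$, since $\kb_n$ is propagating. Hence $R=z_\ell/\cos\theta_n$ and $\boldsymbol\rho=-z_\ell\kb_n/k_{n,z}$, using $\kappa\cos\theta_n=k_{n,z}$. The map $\boldsymbol\rho\mapsto\boldsymbol\rho/R$ is a bijection from $\mathbb R^2$ onto the open unit disc, so this point is the unique solution. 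That gives $\sbb^\star_{n,\ell}$.

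\emph{Step 2: Fresnel zone.} Compute the Hessian of $\Phi$ at $\sbb^\star$:
\[
\nabla^2\Phi=\tfrac{\kappa}{R}\big(\mathbf I_2-\boldsymbol\rho\boldsymbol\rho^{\mathsf T}/R^2\big).
\]
Its eigenvalues are $\kappa\cos\theta_n/z_\ell$ (transverse to $\kb_n$) and $\kappa\cos^3\theta_n/z_\ell$ (along $\kb_n$). The first Fresnel zone is the set where the quadratic phase excess stays below $\pi$. Along the slow (radial) direction this gives the radius $\sqrt{2\pi z_\ell/(\kappa\cos^3\theta_n)}=\sqrt{\lambda z_\ell/\cos^3\theta_n}=r_F$. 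I take the larger semi-axis and inscribe the elliptical zone in a disc of radius $r_F$, which is what the statement uses. The positive-definite Hessian also gives the stationary-phase value: the Weyl integral \eqref{eq:weyl} over $\mathbb R^2$ equals the stationary-point contribution $2\pi\jmath/\sqrt{\det\nabla^2\Phi}\times(\text{amplitude})$ up to $\mathcal O((\kappa z_\ell)^{-1/2})$ relative corrections. This is how \eqref{eq:spectral} is recovered.

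\emph{Step 3: the truncated integral and the iff.} Split the integral over $\mathcal S$ into the stationary contribution and boundary contributions. The latter come from integration by parts or the uniform asymptotics of edge diffraction: Fresnel integrals evaluated at the scaled edge distance $d_a=(\mathcal S_a/2-|\bar s^\star_a|)/r_F$.

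For sufficiency: if $d_a\ge1$, the stationary point lies inside $\mathcal S$. Each edge term is then non-stationary, of relative order $(\kappa z_\ell)^{-1/2}$ after normalizing by the stationary contribution, and oscillatory with bounded Fresnel tails. So the truncated integral matches the whole-plane one to leading order.

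For necessity: if $\bar s^\star$ leaves $\mathcal S$, the leading term is absent entirely. If it sits within $r_F$ of an edge, the Fresnel integral $F(d_a)$ differs from its asymptote $F(\infty)$ by an $\mathcal O(1)$ factor. In both cases the leading term is wrong. The condition $|\bar s^\star_a|\le\mathcal S_a/2-r_F$ is exactly \eqref{eq:validity} after substituting Step 1.

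For the receive kernel, the phase is $\kappa R-\kb_m\cdot\rb$, which flips the walk-off sign to $\ub_\ell+z_\ell\kb_m/k_{m,z}$. If the mode set is closed under $\kb\mapsto-\kb$, the two families of conditions coincide.

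\emph{Main obstacle.} I expect Step 3's "only if" to be the delicate part. Making it rigorous needs a uniform (Fresnel-integral) asymptotic near the edges, rather than the plain separated stationary-point and endpoint expansion. It also needs a precise reading of "accurate to leading order", namely relative error $o(1)$ as $\kappa z_\ell\to\infty$ with the scaled geometry fixed. The threshold $d_a=1$ is then a convention fixed by the first-Fresnel-zone definition, not a sharp constant. I would state this explicitly and rely on the product structure of the rectangular aperture, treating each axis $a$ as a one-dimensional Fresnel integral.
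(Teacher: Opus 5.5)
Your proposal is correct and follows essentially the same route as the paper's proof. Both find the unique stationary point from $\kappa(\sbb-\ub_\ell)/R=-\kb_n$ and compute Hessian eigenvalues $\kappa\cos^3\theta_n/z_\ell$ and $\kappa\cos\theta_n/z_\ell$, taking $r_F$ along the weaker axis. Both then split $\int_{\mathcal S}=\int_{\mathbb R^2}-\int_{\mathbb R^2\setminus\mathcal S}$, with edge terms of relative order $(\kappa z_\ell)^{-1/2}$ against an $\mathcal O(1)$ loss when the zone is cut.

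Reading the ``iff'' as an asymptotic statement, with the $r_F$ margin as a convention, is more careful than the paper's treatment. One small correction: for oblique $\kb_n$ the Hessian has cross terms, so the per-axis reduction is not a literal product of one-dimensional integrals. It is instead a per-edge Fresnel integral of length $\sqrt{2\pi[(\nabla^2\Phi)^{-1}]_{aa}}\le r_F$, which keeps \eqref{eq:validity} conservative.
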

\begin{IEEEproof}
    Please refer to Appendix~\ref{app:prop1}.
\end{IEEEproof}

\begin{remark}\label{rem:near}
The walk-off $z_\ell\kb_n/k_{n,z}$ grows with range and obliquity: the
model is accurate for close targets and paraxial modes and degrades as the
target recedes, opposite to the usual large-aperture intuition. Only modes satisfying \eqref{eq:validity} carry
information about target $\ell$, a cone $\|\kb_n\|/\kappa\lesssim \mathcal S_a/(2z_\ell)$ for
broadside targets with $z_\ell\gg \mathcal S_a$, which truncates the modal budget of every
selection below.
\end{remark}

Within the validity region, with $c_{mn}(z)\triangleq\mathbf b_m^{\mathsf
H}(z)\mathbf a_n(z)/|\mathcal S|$, the channel entries take the exponential
form
\begin{equation}
H_{mn}=\sum_{\ell=1}^{L}\gamma_\ell\,c_{mn}(z_\ell)\,
e^{\jmath(\kb_n-\kb_m)\cdot\ub_\ell}.
\label{eq:hmn}
\end{equation}
Collect the per-target entries into steering vectors
$[\mathbf h_\ell]_{(m,n)}=c_{mn}(z_\ell)e^{\jmath(\kb_n-\kb_m)\cdot\ub_\ell}
\in\mathbb C^{MN}$, so that
$\mathrm{vec}(\mathbf H)=\sum_\ell\gamma_\ell\mathbf h_\ell$. Since $\Fk$ depends
on $z$ only through $e^{\jmath k_zz}$, the product rule gives
$\partial_zc_{mn}=\jmath(k_{m,z}+k_{n,z})c_{mn}$ and the closed-form gradient is given as
\begin{equation}
\frac{\partial[\mathbf h_\ell]_{(m,n)}}{\partial\mathbf u_\ell}
=\jmath\begin{bmatrix}\kb_n-\kb_m\\ k_{m,z}+k_{n,z}\end{bmatrix}
[\mathbf h_\ell]_{(m,n)}.
\label{eq:grad}
\end{equation}

\section{Virtual Arrays in the Wavenumber Domain}\label{sec:virtual}
\subsection{Single-snapshot rank collapse}\label{sec:rank}
\begin{lemma}\label{lem:rank}
The channel response \eqref{eq:hmn} factorizes as
\begin{equation}
\mathbf H=\sum_{\ell=1}^{L}\frac{\gamma_\ell}{|\mathcal S|}\,
\mathrm{diag}\big(e^{-\jmath\kb_m\cdot\ub_\ell}\big)\,
\mathbf B(z_\ell)\mathbf A^{\mathsf T}(z_\ell)\,
\mathrm{diag}\big(e^{\jmath\kb_n\cdot\ub_\ell}\big),
\label{eq:hfact}
\end{equation}
with $\mathbf B(z_\ell)\in\mathbb C^{M\times3}$,
$\mathbf A(z_\ell)\in\mathbb C^{N\times3}$ stacking
$\mathbf b_m^{\mathsf H}(z_\ell)$, $\mathbf a_n^{\mathsf T}(z_\ell)$. Hence
$\mathrm{rank}(\mathbf H)\le 3L$, and a single snapshot
$\mathbf y=\mathbf H\mathbf q$ observes at most a $3L$-dimensional projection
of the transmit design space.
\end{lemma}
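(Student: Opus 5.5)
The plan is to verify the factorization \eqref{eq:hfact} entrywise, read off the rank bound from it by subadditivity of rank, and then deduce the statement about a single snapshot from the range of $\mathbf H$.

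\textbf{Factorization.} Fix a target $\ell$ and take the $(m,n)$ entry of the $\ell$-th summand in \eqref{eq:hfact}. Left multiplication by $\mathrm{diag}(e^{-\jmath\kb_m\cdot\ub_\ell})$ scales row $m$ by $e^{-\jmath\kb_m\cdot\ub_\ell}$. Right multiplication by $\mathrm{diag}(e^{\jmath\kb_n\cdot\ub_\ell})$ scales column $n$ by $e^{\jmath\kb_n\cdot\ub_\ell}$. The core entry is $[\mathbf B(z_\ell)\mathbf A^{\mathsf T}(z_\ell)]_{mn}$. Since row $m$ of $\mathbf B$ is $\mathbf b_m^{\mathsf H}(z_\ell)$ and row $n$ of $\mathbf A$ is $\mathbf a_n^{\mathsf T}(z_\ell)$, this core entry equals $\mathbf b_m^{\mathsf H}(z_\ell)\mathbf a_n(z_\ell)=|\mathcal S|\,c_{mn}(z_\ell)$. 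The entry of the summand is therefore
\[
\gamma_\ell\, c_{mn}(z_\ell)\, e^{\jmath(\kb_n-\kb_m)\cdot\ub_\ell}.
\]
Summing over $\ell$ reproduces \eqref{eq:hmn}, so \eqref{eq:hfact} holds wherever the spectral model \eqref{eq:spectral} applies, i.e., within the validity region of Proposition~\ref{prop:valid}.

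\textbf{Rank bound.} In each summand the two diagonal matrices are nonsingular, since their entries are unimodular exponentials. Hence the rank of the $\ell$-th summand equals $\mathrm{rank}(\mathbf B(z_\ell)\mathbf A^{\mathsf T}(z_\ell))$. That product passes through $\mathbb C^3$, so its rank is at most $\min\{3,\mathrm{rank}\,\mathbf A,\mathrm{rank}\,\mathbf B\}\le 3$. Subadditivity of rank over the $L$ summands gives $\mathrm{rank}(\mathbf H)\le 3L$.

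One could sharpen this to $2L$, because the projector in \eqref{eq:weyl} has rank $2$ and tangential polarizations restrict $\mathbf p_n$ and $\mathbf w_m$. That refinement is not needed for the stated bound. I will mention it only in passing, since the projector's null direction $\mathbf k_n$ differs across modes and the tighter bound does not follow directly.

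\textbf{Single-snapshot interpretation.} For any $\mathbf q$, the noiseless observation $\mathbf y=\mathbf H\mathbf q$ lies in $\mathrm{range}(\mathbf H)$, which has dimension at most $3L$. Moreover $\mathbf H\mathbf q$ depends on $\mathbf q$ only through its component in $\mathrm{row}(\mathbf H)=\mathrm{null}(\mathbf H)^\perp$, which also has dimension at most $3L$. Hence $\mathbf H\mathbf q=\mathbf H\,\mathbf P_{\mathrm{row}}\mathbf q$, where $\mathbf P_{\mathrm{row}}$ is the orthogonal projector onto $\mathrm{row}(\mathbf H)$. Whatever $N$-dimensional probing vector is chosen, the measurement therefore reveals only its projection onto a subspace of dimension at most $3L$, and the rest of the transmit design space is invisible.

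\textbf{Main obstacle.} The algebra is routine; the step that needs care is the bookkeeping of conjugations and transposes. The row of $\mathbf A$ must be $\mathbf a_n^{\mathsf T}$, not $\mathbf a_n^{\mathsf H}$, so that $\mathbf B\mathbf A^{\mathsf T}$ yields exactly $\mathbf b_m^{\mathsf H}\mathbf a_n$ with no stray conjugate. The sign of the receive phase must match $\Fk(-\kb_m;z)$ in \eqref{eq:spectral}. I would check both against \eqref{eq:spectral} and \eqref{eq:hmn} before writing the proof out.
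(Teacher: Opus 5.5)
Your proof is correct and is essentially the paper's argument: the unimodular diagonal phase matrices preserve rank, $\mathbf B(z_\ell)\mathbf A^{\mathsf T}(z_\ell)$ factors through $\mathbb C^3$, and subadditivity gives $3L$. Your entrywise check of \eqref{eq:hfact} and the row-space reading of the single-snapshot claim only spell out what the paper leaves implicit. You are also right not to lean on the $2L$ aside: with a common polarization such as $\mathbf p_n=\hat{\mathbf x}$, the vectors $(\I-\mathbf k_n\mathbf k_n^{\mathsf T}/\kappa^2)\hat{\mathbf x}$ generically span all of $\mathbb C^3$, so a per-target rank of $2$ needs extra structure beyond tangentiality (e.g.\ $\mathbf p_n\perp\kb_n$, which makes $\mathbf a_n\propto\mathbf p_n$).
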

\begin{IEEEproof}
Each summand has rank $\le3$, framed by diagonal (full-rank) phase matrices, and ranks add sub-additively.
\end{IEEEproof}

Low rank itself is benign, the small parametrization that makes the $5L$ unknowns recoverable. The obstacle is observational. In one snapshot, target $\ell$ samples
the illumination at a single point, so every transmit position phase
enters the data only through the condensation
$\mathbf g_\ell=\sum_nq_n\mathbf a_n(z_\ell)e^{\jmath\kb_n\cdot\ub_\ell}$, and
$y_m=\sum_\ell\gamma_ie^{-\jmath\kb_m\cdot\ub_\ell}
\mathbf b_m^{\mathsf H}(z_\ell)\mathbf g_\ell+n_m$: the only position phases
with diversity across the data index $m$ are the receive ones. Any matched
filter draws its spatial response from the $M$ receive wavenumbers alone,
and the grating lobes of a sparse receive selection cannot be
suppressed by transmit design, the wavenumber-domain form of the
single-snapshot degeneracy of MIMO radar \cite{hassanien2015single}.
Equivalently, by Lemma~\ref{lem:rank} the map
$\mathbf q\mapsto\mathbf H\mathbf q$ compresses the $N$-dimensional
design space through rank at most $3L$: transmit modes beyond $3L$ excite
nothing observable within the snapshot. The constant $3$ is the per-target
polarization diversity ($2$ for tangential single-polarization families,
$1$ for scalar models).

\subsection{Decoupling and the difference lattice}\label{sec:code}
Coding across snapshots is the remedy.
\begin{lemma}\label{lem:code}
If $I\ge N$ and $\mathrm{rank}(\mathbf Q)=N$, then
$\widehat{\mathbf H}=\mathbf Y_{\mathrm{meas}}\mathbf Q^{+}$ recovers the
channel. If moreover $\mathbf Q\mathbf Q^{\mathsf H}=c\,\mathbf I_N$
(orthogonal equal-norm rows), the decoded noise is white,
$\mathrm{vec}(\widehat{\mathbf H}-\mathbf H)\sim\mathcal{CN}(\mathbf 0,
\tfrac{\sigma^2}{c}\mathbf I_{MN})$. Examples: $\mathbf Q=\mathbf I_N$
(sequential single-mode probing, $c=1$) or a Hadamard/DFT code transmitting all
modes at full power in every snapshot.
\end{lemma}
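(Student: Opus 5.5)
The plan is to right-multiply the coded model \eqref{eq:coded} by the pseudo-inverse and then compute the covariance of the decoded noise directly.

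\emph{Exact recovery.} Since $\mathbf Q\in\mathbb C^{N\times I}$ with $I\ge N$ and $\mathrm{rank}(\mathbf Q)=N$, it has full row rank. Hence $\mathbf Q\mathbf Q^{\mathsf H}$ is invertible and $\mathbf Q^{+}=\mathbf Q^{\mathsf H}(\mathbf Q\mathbf Q^{\mathsf H})^{-1}$ is a right inverse, $\mathbf Q\mathbf Q^{+}=\mathbf I_N$. Multiplying $\mathbf Y_{\mathrm{meas}}=\mathbf H\mathbf Q+\mathbf N$ on the right gives
\[
\widehat{\mathbf H}=\mathbf H+\mathbf N\mathbf Q^{+}.
\]
The estimate is therefore exact in the noiseless case and unbiased in general, because $\mathbb E\,\mathbf N=\mathbf 0$. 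This is what ``recovers the channel'' means here. No structure of $\mathbf H$ from Lemma~\ref{lem:rank} is used, so recovery holds regardless of the rank-$3L$ collapse.

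\emph{Whiteness.} Vectorize with $\mathrm{vec}(\mathbf N\mathbf Q^{+})=((\mathbf Q^{+})^{\mathsf T}\otimes\mathbf I_M)\,\mathrm{vec}(\mathbf N)$. The noise is independent across snapshots and across receive modes, so $\mathrm{vec}(\mathbf N)\sim\mathcal{CN}(\mathbf 0,\sigma^2\mathbf I_{MI})$. A linear map preserves circular Gaussianity, and the covariance is
\[
\sigma^2\big((\mathbf Q^{+})^{\mathsf T}(\mathbf Q^{+})^{*}\big)\otimes\mathbf I_M
=\sigma^2\big((\mathbf Q^{+\mathsf H}\mathbf Q^{+})^{*}\big)\otimes\mathbf I_M .
\]
Next, $\mathbf Q^{+\mathsf H}\mathbf Q^{+}=(\mathbf Q\mathbf Q^{\mathsf H})^{-1}\mathbf Q\mathbf Q^{\mathsf H}(\mathbf Q\mathbf Q^{\mathsf H})^{-1}=(\mathbf Q\mathbf Q^{\mathsf H})^{-1}$, which under $\mathbf Q\mathbf Q^{\mathsf H}=c\,\mathbf I_N$ equals $c^{-1}\mathbf I_N$. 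The covariance is therefore $\tfrac{\sigma^2}{c}\mathbf I_{MN}$, as claimed.

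\emph{Examples.} For $\mathbf Q=\mathbf I_N$ we have $c=1$ and $\mathbf Q^{+}=\mathbf I_N$. An $N\times I$ Hadamard or DFT code with $I\ge N$ and unit-modulus entries has mutually orthogonal rows, each of squared norm $I$, so $c=I$. Every entry has modulus one, so every mode is excited at full power in every snapshot.

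The only point needing care is the Kronecker/conjugation bookkeeping in the vectorized covariance. It is harmless, because the final matrix is a real multiple of the identity.
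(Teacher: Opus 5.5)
Your proof is correct: full row rank gives $\mathbf Q\mathbf Q^{+}=\mathbf I_N$, so $\widehat{\mathbf H}=\mathbf H+\mathbf N\mathbf Q^{+}$, and the decoded covariance $\sigma^2\big((\mathbf Q\mathbf Q^{\mathsf H})^{-1}\big)^{*}\otimes\mathbf I_M$ reduces to $\tfrac{\sigma^2}{c}\mathbf I_{MN}$, with $N$ rows of an $I\times I$ Hadamard/DFT matrix giving $c=I$. The paper states this lemma without a written proof and treats it as immediate, so your argument is the standard verification it implicitly relies on.
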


The transmit sum is thus removable whenever the snapshot excitations are linearly independent, and all estimation operates on the decoded $\mathrm{vec}(\widehat{\mathbf H})\in\mathbb C^{MN}$. For $I=1$ the observable $\mathbf y=\mathbf H\mathbf q$ is worth only $M$ data dimensions. By contrast,
the
decoded channel of Lemma~\ref{lem:code} has $MN$ entries, each carrying its
own position phase on the difference lattice
$\mathcal D=\{\kb_n-\kb_m\}$, i.e., on the integer differences
$\mathbf d=\mathbf t_n-\mathbf r_m$: a virtual array of $NM$ points in
wavenumber space from $N{+}M$ modes \cite{stoica2007probing}. The
transplant survives the near field, where the element-domain construction
fails \cite{zeng2025tutorial}: the spherical wavefront lives in
$c_{mn}(z_\ell)$ and in \eqref{eq:validity}, while the transverse phases
remain exactly additive in $\kb_n-\kb_m$. Vectorizing the decoded data,
\begin{equation}
\mathrm{vec}(\widehat{\mathbf H})=\Hc(\mathbf u)\bm\gamma+\tilde{\mathbf n},
\quad \Hc=[\mathbf h_1,\dots,\mathbf h_L]\in\mathbb C^{MN\times L},
\label{eq:vecmodel}
\end{equation}
with $\tilde{\mathbf n}\sim\mathcal{CN}(\mathbf 0,\tfrac{\sigma^2}{c}\mathbf
I_{MN})$: linear in $\bm\gamma$, so profiled maximum likelihood applies. The
reflectivities admit the closed form
$\widehat{\bm\gamma}(\mathbf u)=(\Hc^{\mathsf H}\Hc)^{-1}\Hc^{\mathsf H}
\mathrm{vec}(\widehat{\mathbf H})$, leaving the concentrated cost
$V(\mathbf u)=\|\mathbf P^{\perp}\mathrm{vec}(\widehat{\mathbf H})\|^2$ with
$\mathbf P^{\perp}\triangleq\mathbf I_{MN}-\Hc(\Hc^{\mathsf H}\Hc)^{-1}
\Hc^{\mathsf H}$.

\section{CRLB, Identifiability, and Mode Selection}\label{sec:crlb}
\subsection{Bound and identifiability}
All quantities refer to the decoded model \eqref{eq:vecmodel}. Collect the
real parameters
$\bm\theta=[\mathbf u^{\mathsf T},\Re\{\bm\gamma\}^{\mathsf T},
\Im\{\bm\gamma\}^{\mathsf T}]^{\mathsf T}\in\mathbb R^{5L}$. The data are
complex Gaussian with mean $\bm\mu(\bm\theta)=\Hc(\mathbf u)\bm\gamma$ and
covariance $\tfrac{\sigma^2}{c}\mathbf I_{MN}$, so
$\mathbf F=\tfrac{2c}{\sigma^2}\Re\{(\partial\bm\mu/\partial\bm\theta)^{\mathsf
H}(\partial\bm\mu/\partial\bm\theta)\}$ with blocks
\begin{equation}
\frac{\partial\bm\mu}{\partial\mathbf u_\ell}
=\gamma_\ell\frac{\partial\mathbf h_\ell}{\partial\mathbf u_\ell},\qquad
\frac{\partial\bm\mu}{\partial\Re\{\gamma_\ell\}}=\mathbf h_\ell,\qquad
\frac{\partial\bm\mu}{\partial\Im\{\gamma_\ell\}}=\jmath\mathbf h_\ell,
\label{eq:sens}
\end{equation}
the position block stacking
$\mathbf Z\triangleq[\gamma_1\partial\mathbf h_1/\partial\mathbf u_1,
\dots,\gamma_L\partial\mathbf h_L/\partial\mathbf u_L]
\in\mathbb C^{MN\times3L}$ via \eqref{eq:grad} at the true parameters. Eliminating the nuisance block by the Schur complement (the
map $\mathbf X\mapsto\big[\begin{smallmatrix}\Re\mathbf X&-\Im\mathbf X\\
\Im\mathbf X&\Re\mathbf X\end{smallmatrix}\big]$ is a ring homomorphism, so
the $\bm\gamma$-block inverts in closed form) yields
\begin{equation}
\mathbf F_{u|\gamma}=\tfrac{2c}{\sigma^2}\,
\Re\big\{\mathbf Z^{\mathsf H}\mathbf P^{\perp}\mathbf Z\big\},
\qquad
\mathrm{Cov}(\widehat{\mathbf u})\succeq\mathbf F_{u|\gamma}^{-1}.
\label{eq:crlb}
\end{equation}
The projector is the point of profiling: by \eqref{eq:sens}, reflectivity
adjustments move the mean exactly within $\mathrm{span}(\Hc)$, so only the
projected residue $\Zp\triangleq\mathbf P^{\perp}\mathbf Z$ is informative
about $\mathbf u$.

\begin{proposition}\label{prop:ident}
Assume $\Hc$ has full column rank $L$. Then:
(i) \emph{Necessary:} $\mathbf F_{u|\gamma}\succ0$ requires $2MN\ge5L$.
(ii) \emph{Sufficient:} if the complex rank of $\Zp$ equals $3L$, which
requires $MN\ge4L$, then $\mathbf F_{u|\gamma}\succ0$.
(iii) \emph{Exact for $L=1$:} $\mathbf F_{u|\gamma}\succ0$ if and only if the
real $MN\times4$ matrix $\mathbf V$ with rows
$[\mathbf v_{mn}^{\mathsf T},\,1]$, where
$\mathbf v_{mn}\triangleq[(\kb_n-\kb_m)^{\mathsf T},\,k_{m,z}+k_{n,z}]^{\mathsf T}$
collects the per-pair position sensitivities of \eqref{eq:grad}, has rank
$4$. Since every
column of $\mathbf V$ is an additive function of the pair index $(m,n)$,
$\mathrm{rank}(\mathbf V)\le\min(MN,M{+}N{-}1)$: a single target requires
$M{+}N\ge5$ and modes on at least two radii $\|\kb\|$.
For single-snapshot probing ($I=1$), (i) and (ii) hold with $MN\mapsto M$.
\end{proposition}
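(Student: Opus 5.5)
The plan is to work throughout with the real-valued embedding of the decoded model \eqref{eq:vecmodel}. Write $\mathbf J=\partial\bm\mu/\partial\bm\theta\in\mathbb C^{MN\times5L}$, so that $\mathbf F=\tfrac{2c}{\sigma^2}\tilde{\mathbf J}^{\mathsf T}\tilde{\mathbf J}$ with $\tilde{\mathbf J}=[\Re\mathbf J;\Im\mathbf J]\in\mathbb R^{2MN\times5L}$. Since $\Hc$ has full column rank, the $\bm\gamma$-block of $\mathbf F$ is $\tfrac{2c}{\sigma^2}$ times the real embedding of $\Hc^{\mathsf H}\Hc\succ0$, hence invertible. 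By the Schur complement, $\mathbf F_{u|\gamma}\succ0$ is then equivalent to $\mathbf F\succ0$, i.e., to $\tilde{\mathbf J}$ having full column rank $5L$.

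\emph{Part (i).} Full column rank of $\tilde{\mathbf J}$ forces $2MN\ge5L$. For $I=1$ the same argument applies with $MN$ replaced by $M$.

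\emph{Part (ii).} Since $\mathbf P^{\perp}$ is Hermitian and idempotent, \eqref{eq:crlb} gives
\[
\mathbf x^{\mathsf T}\mathbf F_{u|\gamma}\mathbf x=\tfrac{2c}{\sigma^2}\|\Zp\mathbf x\|^2
\]
for real $\mathbf x$. If $\Zp$ has complex rank $3L$, then $\Zp\mathbf x\neq\mathbf 0$ for every nonzero complex $\mathbf x$, and in particular for every nonzero real one, so $\mathbf F_{u|\gamma}\succ0$. The columns of $\Zp$ lie in the orthogonal complement of $\mathrm{span}(\Hc)$, which has complex dimension $MN-L$. Rank $3L$ therefore needs $MN-L\ge3L$, i.e., $MN\ge4L$. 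For $I=1$, replace $MN$ by $M$ once again.

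\emph{Part (iii).} For $L=1$, write $\mathbf h=\mathbf h_1$ and $\gamma=\gamma_1\neq0$; this holds because $\Hc$ has full column rank, so $\mathbf h\neq\mathbf 0$, and a zero $\gamma$ is excluded implicitly. By \eqref{eq:grad},
\[
\mathbf Z=\jmath\gamma\,\mathrm{diag}(\mathbf h)\,\mathbf V_{1:3},
\]
where $\mathbf V_{1:3}$ denotes the first three columns of $\mathbf V$. The quantity to control is
\[
\|\mathbf P^{\perp}\mathbf Z\mathbf x\|^2=\min_{\alpha\in\mathbb C}\|\mathbf Z\mathbf x-\alpha\mathbf h\|^2 .
\]
This vanishes iff $\jmath\gamma\,\mathrm{diag}(\mathbf h)\mathbf V_{1:3}\mathbf x=\alpha\mathbf h$ for some $\alpha$. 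On the entries where $h_{(m,n)}\neq0$, this says $\mathbf v_{mn}^{\mathsf T}\mathbf x=\alpha/(\jmath\gamma)$, a constant.

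The main obstacle is the step that follows: that constant must be shown to be real, and the entries with $h_{(m,n)}=0$ must be handled. Reality follows because $\mathbf x$ and the rows of $\mathbf V$ are real. For propagating modes, evanescent $k_z$ makes $\mathbf v$ complex, so the restriction to propagating modes (implicit via Proposition~\ref{prop:valid}) must be invoked. I would assume $c_{mn}(z_1)\neq0$ for all retained pairs, so that all entries of $\mathbf h$ are nonzero. Then a nonzero real $\mathbf x$ in the kernel exists iff $[\mathbf x;-\beta]$ lies in $\ker\mathbf V$ for some real $\beta$. Conversely, any kernel vector of $\mathbf V$ has $\mathbf x\neq0$, since otherwise the all-ones column would be annihilated with a nonzero coefficient. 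Hence $\mathbf F_{u|\gamma}\succ0$ iff $\mathrm{rank}\,\mathbf V=4$.

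For the corollary, each column of $\mathbf V$ has the form $f(m)+g(n)$:
\begin{itemize}
\item the two lateral columns are $-\kb_m+\kb_n$;
\item the $k_z$ column is $k_{m,z}+k_{n,z}$;
\item the ones column is additive trivially.
\end{itemize}
Additive matrices $[f(m)+g(n)]$ span a space of dimension $M+N-1$, which gives $\mathrm{rank}\,\mathbf V\le\min(MN,M+N-1)$. Rank $4$ then requires $M+N\ge5$. Moreover, if all modes share one radius $\|\kb\|$, every $k_z$ is equal, so the third position column is a constant multiple of the ones column and the rank drops below $4$. Hence modes on at least two radii are needed.
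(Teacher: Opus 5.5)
Your proposal is correct and follows the paper's proof essentially step for step. Part (ii) matches the appendix via the quadratic form $\|\Zp\mathbf x\|^2$ and the $(MN-L)$-dimensional range of $\mathbf P^{\perp}$, and part (iii) matches it via the forced-real constant $c/(\jmath\gamma)$, the all-ones column excluding $\mathbf x=\mathbf 0$, and the additive-column cap $M{+}N{-}1$; in (i) you recast the paper's masking-pair intersection count in $\mathbb R^{2MN}$ as the equivalent full-column-rank requirement on the real-embedded $2MN\times5L$ Jacobian, using the Schur complement, and your explicit assumptions (propagating modes only, $\gamma\neq0$, all $c_{mn}(z_1)\neq0$) are exactly those the paper uses tacitly when it calls $\mathbf V$ real and asserts that all entries of $\mathbf h$ are nonzero.
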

\begin{IEEEproof}
    Please refer to Appendix~\ref{app:prop2}.
\end{IEEEproof}

The count of (i) is not tight: for $L=1$, part (iii) rules out $MN=3$
despite $2MN\ge5L$, and the split $M=N=2$ despite $MN=4L$, since
$\mathrm{rank}(\mathbf V)\le M{+}N{-}1=3$. Identifiability thus depends on the split and the mode geometry, not on
$MN$ alone, and the rank condition of (ii) must be checked, not assumed
generic, while a genuine gap can survive for $L\ge2$ (numerically,
$(M,N)=(4,2)$, $L=3$ is nonsingular). The coded scheme meets all conditions
with wide margin, whereas single-snapshot probing must meet them with $M$
alone: target capacity is a further dividend of the virtual array.

\subsection{Lattice weight and nested selection}\label{sec:design}
The Fisher matrix has an exact geometric form on the lattice. Group the
pairs by their integer difference and define, with the Kronecker delta, the
lattice weight
\begin{equation}
\omega(\mathbf d)\triangleq\!\sum_{m,n}\delta_{\mathbf d,\,\mathbf t_n-\mathbf r_m}
|c_{mn}(z)|^2,
\label{eq:weight}
\end{equation}
with total weight $\Omega\triangleq\sum_{\mathbf d}\omega(\mathbf d)
=\sum_{mn}|c_{mn}|^2$.

\begin{lemma}\label{lem:fisherlat}
For $L=1$ at $\mathbf u=(\ub,z)$,
\begin{equation}
\mathbf F_{u|\gamma}=\tfrac{2c}{\sigma^2}\,|\gamma|^2\,\Omega\,
\mathrm{Cov}_{\omega}\{\mathbf v\},
\label{eq:fishlat}
\end{equation}
where $\mathrm{Cov}_{\omega}$ denotes covariance under the pair
distribution $|c_{mn}|^2/\Omega$.
\end{lemma}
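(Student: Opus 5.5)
We specialize \eqref{eq:crlb} to $L=1$ and evaluate the projected quadratic form entrywise. With a single target, $\Hc=\mathbf h$ is one column, so $\mathbf P^{\perp}=\mathbf I_{MN}-\mathbf h\mathbf h^{\mathsf H}/\|\mathbf h\|^2$. By \eqref{eq:grad}, the position block is $\mathbf Z=\jmath\gamma\,\mathrm{diag}(\mathbf h)\,\mathbf V_0$, where $\mathbf V_0\in\mathbb R^{MN\times3}$ stacks the rows $\mathbf v_{mn}^{\mathsf T}$. The entries of $\mathbf V_0$ are real for propagating modes; we assume this, since the admissible cone of Remark~\ref{rem:near} keeps all modes propagating. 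This realness is what will later let $\Re\{\cdot\}$ disappear.

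\textbf{Step 1: identify the norms and correlations with the weight.} Since $|[\mathbf h]_{(m,n)}|^2=|c_{mn}(z)|^2$, we have $\|\mathbf h\|^2=\sum_{mn}|c_{mn}|^2=\Omega$. Writing $p_{mn}\triangleq|c_{mn}|^2/\Omega$, two sums follow:
\begin{equation*}
\mathbf Z^{\mathsf H}\mathbf Z=|\gamma|^2\sum_{mn}|c_{mn}|^2\mathbf v_{mn}\mathbf v_{mn}^{\mathsf T}=|\gamma|^2\Omega\,\mathbb E_\omega\{\mathbf v\mathbf v^{\mathsf T}\},
\qquad
\mathbf h^{\mathsf H}\mathbf Z=\jmath\gamma\,\Omega\,\mathbb E_\omega\{\mathbf v\}^{\mathsf T}.
\end{equation*}
In the second sum the phases $e^{\jmath(\kb_n-\kb_m)\cdot\ub}$ cancel in $\bar h_{mn}h_{mn}$. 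This phase cancellation is the step that uses the exponential form \eqref{eq:hmn}.

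\textbf{Step 2: form the projected quadratic form.} Substituting,
\begin{equation*}
\mathbf Z^{\mathsf H}\mathbf P^{\perp}\mathbf Z=\mathbf Z^{\mathsf H}\mathbf Z-\frac{(\mathbf h^{\mathsf H}\mathbf Z)^{\mathsf H}(\mathbf h^{\mathsf H}\mathbf Z)}{\Omega}=|\gamma|^2\Omega\big(\mathbb E_\omega\{\mathbf v\mathbf v^{\mathsf T}\}-\mathbb E_\omega\{\mathbf v\}\mathbb E_\omega\{\mathbf v\}^{\mathsf T}\big).
\end{equation*}
The factors $\jmath$ and $\gamma$ appear as $|\jmath\gamma|^2=|\gamma|^2$ in both terms, so the result is a real matrix and $\Re\{\cdot\}$ acts trivially. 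Inserting this into \eqref{eq:crlb} gives $\tfrac{2c}{\sigma^2}|\gamma|^2\Omega\,\mathrm{Cov}_\omega\{\mathbf v\}$. This is a covariance over the pair distribution $p_{mn}$.

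\textbf{Step 3: pass from pairs to the lattice.} The first two entries of $\mathbf v_{mn}$, namely $\kb_n-\kb_m$, depend only on $\mathbf d=\mathbf t_n-\mathbf r_m$, so grouping pairs by $\mathbf d$ turns the transverse block into a covariance under $\omega(\mathbf d)/\Omega$ as in \eqref{eq:weight}. The $z$-component $k_{m,z}+k_{n,z}$ is not in general a function of $\mathbf d$ alone. Hence the statement should be read, as written, with $\mathrm{Cov}_\omega$ taken under the pair distribution, with the lattice grouping exact for the transverse block.

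The main obstacle is keeping this distinction honest, and checking the realness assumption on $k_{z}$. If evanescent modes were admitted, $\mathbf v_{mn}$ would be complex and the $\Re$ would no longer drop. I would restrict to propagating modes, consistent with Proposition~\ref{prop:valid}.
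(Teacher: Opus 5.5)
Your proposal is correct and is essentially the paper's proof: writing $\mathbf Z=\jmath\gamma\,\mathrm{diag}(\mathbf h)\,\mathbf V_0$, using $\mathbf P^{\perp}=\mathbf I_{MN}-\mathbf h\mathbf h^{\mathsf H}/\Omega$, and reading off the $|c_{mn}|^2$-weighted second moment minus the outer product of the weighted means is exactly the paper's one-line computation. Your explicit restriction to propagating modes (so that $\mathbf v_{mn}$ is real and $\Re\{\cdot\}$ drops out) and your note that only the transverse block groups exactly by $\mathbf d$ spell out what the paper leaves implicit, in the proof and in the discussion that follows it.
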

\begin{IEEEproof}
By \eqref{eq:grad}, the columns of $\mathbf Z$ are
$\jmath\gamma\,\mathbf v_{mn}$ times the entries of $\mathbf h$, and for
$L=1$, $\mathbf P^{\perp}$ projects out $\mathbf h$ itself. Substituting in
\eqref{eq:crlb} and using $|[\mathbf h]_{mn}|^2=|c_{mn}|^2$,
$\Re\{\mathbf Z^{\mathsf H}\mathbf P^{\perp}\mathbf Z\}
=|\gamma|^2\big[\sum_{mn}|c_{mn}|^2\mathbf v\mathbf v^{\mathsf T}
-\tfrac1\Omega\big(\sum_{mn}|c_{mn}|^2\mathbf v\big)
\big(\sum_{mn}|c_{mn}|^2\mathbf v\big)^{\mathsf T}\big]$.
\end{IEEEproof}

\begin{remark}\label{rem:coupling}
For $L\ge2$, $\mathbf F_{u|\gamma}$ is not exactly block diagonal. By
\eqref{eq:hmn}, the normalized cross-Gram
$\mathbf h_\ell^{\mathsf H}\mathbf h_{\ell'}/\Omega$ is the
$\omega$-weighted beampattern of the design evaluated at the separation of
targets $\ell$ and $\ell'$, and substituting $\mathbf P^{\perp}$ in
\eqref{eq:crlb} yields the block diagonal of per-target copies of
\eqref{eq:fishlat} plus diagonal corrections and coupling blocks built
from this beampattern and its first two derivatives at the separations.
Hole-free coverage makes the beampattern a Dirichlet kernel with one-cell
mainlobe, so the couplings fall to sidelobe level once targets are a cell
apart: the spread that maximizes $\mathrm{Cov}_\omega$ also localizes the
couplings. The exact multi-target conditions remain those of
Proposition~\ref{prop:ident}.
\end{remark}

The transverse sensitivities depend on the pair only through the
difference, so the transverse block of \eqref{eq:fishlat} is the
covariance of $\kb_n-\kb_m$ under the normalized lattice weight
$\omega(\mathbf d)/\Omega$: information at a fixed pair budget is the
spread of the lattice weight, not its height. Well inside the propagating disc,
$|c_{mn}|$ is nearly constant and $\omega$ reduces to the pair
multiplicity
$w(\mathbf d)=\sum_{mn}\delta_{\mathbf d,\mathbf t_n-\mathbf r_m}$, the
cross-correlation of the index sets, and since
$\mathbf d=\mathbf t-\mathbf r$ covariances add,
$\mathrm{Cov}_w=\mathrm{Cov}_{\mathcal T}+\mathrm{Cov}_{\mathcal R}$: each
set should be spread as widely as the disc and \eqref{eq:validity} allow.
Coverage also serves the estimator: a whitened acquisition surface (pair weights equalized) is the Fourier transform of $w$, so holes raise sidelobes or ambiguities, while $w\equiv1$ attains the uniform floor.
Maximal hole-free spread at minimal mode count is the nested construction:

\begin{lemma}\label{lem:nested}
Per axis $a$, in units of $2\pi/\mathcal S_a$, pair a dense index set
$\mathcal T=\{-p,\dots,p\}$ with a coarse set
$\mathcal R=(2p{+}1)\cdot\{-s,\dots,s\}$ (stride equal to the dense width).
Then the differences $\{t-r\}$ cover every integer of
$[-(2p{+}1)s-p,\,(2p{+}1)s+p]$ exactly once. Assigning 2-D products of these
sets to the two sides yields a hole-free
$[(2p{+}1)(2s{+}1)]^2$-point difference lattice with $w\equiv1$ from
$(2p{+}1)^2+(2s{+}1)^2$ modes, a count growing only as the square root of the number of covered cells.
\end{lemma}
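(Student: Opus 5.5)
The 1-D claim is a mixed-radix (division-algorithm) statement, and the 2-D claim will follow by taking products axis by axis. Write $W=2p{+}1$ for the dense width. Every difference has the form $t-r=t-Wj$ with $t\in\{-p,\dots,p\}$ and $j\in\{-s,\dots,s\}$; since the coarse set is symmetric, we may replace $-j$ by $j$ and study $t+Wj$ instead.

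\textbf{Existence and uniqueness in 1-D.} Take any integer $d$ with $|d|\le Ws+p$. Division with a centered remainder gives a unique pair $(j,t)\in\mathbb Z\times\{-p,\dots,p\}$ with $d=Wj+t$, because the symmetric residue system $\{-p,\dots,p\}$ is a complete set of representatives modulo $W$. This is the step where the stride equal to the dense width is used.

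\textbf{Range of the coarse index.} It remains to show $|j|\le s$, which follows from
\[
W|j|\le|d|+|t|\le Ws+2p<W(s+1).
\]
Conversely, every pair in the product set yields $|t+Wj|\le p+Ws$, so the differences lie in the stated interval. Counting confirms the bijection: $(2p{+}1)(2s{+}1)$ pairs map onto exactly $2(Ws+p)+1=W(2s+1)$ integers. Thus every integer of $[-(2p{+}1)s-p,\,(2p{+}1)s+p]$ is covered exactly once, i.e.\ $w\equiv1$ per axis.

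\textbf{Two-dimensional lattice.} Assign $\mathcal T_x\times\mathcal T_y$ to the transmit side and $\mathcal R_x\times\mathcal R_y$ to the receive side. The difference $\mathbf t_n-\mathbf r_m$ then splits componentwise, so the pair multiplicity factorizes as $w(\mathbf d)=w_x(d_x)\,w_y(d_y)$. Each factor equals $1$ on its interval and $0$ outside it, which gives a hole-free $[W(2s{+}1)]^2$ lattice with $w\equiv1$.

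\textbf{Mode count.} The two sides use $(2p{+}1)^2+(2s{+}1)^2$ modes in total. Choosing $p\approx s$ balances the two terms, and the count is then of order $2\sqrt{C}$, where $C=[(2p{+}1)(2s{+}1)]^2$ is the number of covered cells; this is the claimed square-root growth.

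The only real obstacle is keeping the sign and symmetry bookkeeping straight, so that the centered-remainder argument applies to the differences $t-r$ rather than to sums. Everything else is routine.
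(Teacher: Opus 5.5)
Your proof is correct and follows the paper's route: division by $2p{+}1$ with centered remainder, where the bound on the quotient places the coarse index in $\{-s,\dots,s\}$ and uniqueness of the decomposition gives $w\equiv1$. You also make explicit several steps the paper leaves implicit: the quotient estimate $W|j|<W(s{+}1)$, the reverse inclusion of the differences in the interval, the factorization $w(\mathbf d)=w_x(d_x)\,w_y(d_y)$ in 2-D, and the fact that the square-root mode count holds for balanced $p\approx s$.
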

\begin{IEEEproof}
Divide any $d$ in the range by $2p{+}1$ with centered remainder:
$d=q(2p{+}1)+t$, $|t|\le p$, and $|d|\le(2p{+}1)s+p$ forces $|q|\le s$. Then
$t\in\mathcal T$, $r=-q(2p{+}1)\in\mathcal R$, $d=t-r$, and the
decomposition is unique, whence $w\equiv1$.
\end{IEEEproof}

\begin{remark}\label{rem:resources}
RF chains count simultaneous independent signal ports, not spatial modes. A
waveguide-fed aperture driven by a single feed realizes one programmable
aperture profile per snapshot
\cite{sleasman2017single,shlezinger2021dma}, so the deterministic sequential
code $\mathbf Q=\mathbf I_N$ costs one transmit chain and $I=N$ snapshots,
whereas the $M$ receive projections must be formed simultaneously on each
snapshot, one chain (output port) per mode \cite{shlezinger2021dma}.
\end{remark}

\begin{figure}[!t]
\centering
\includegraphics[width=\columnwidth]{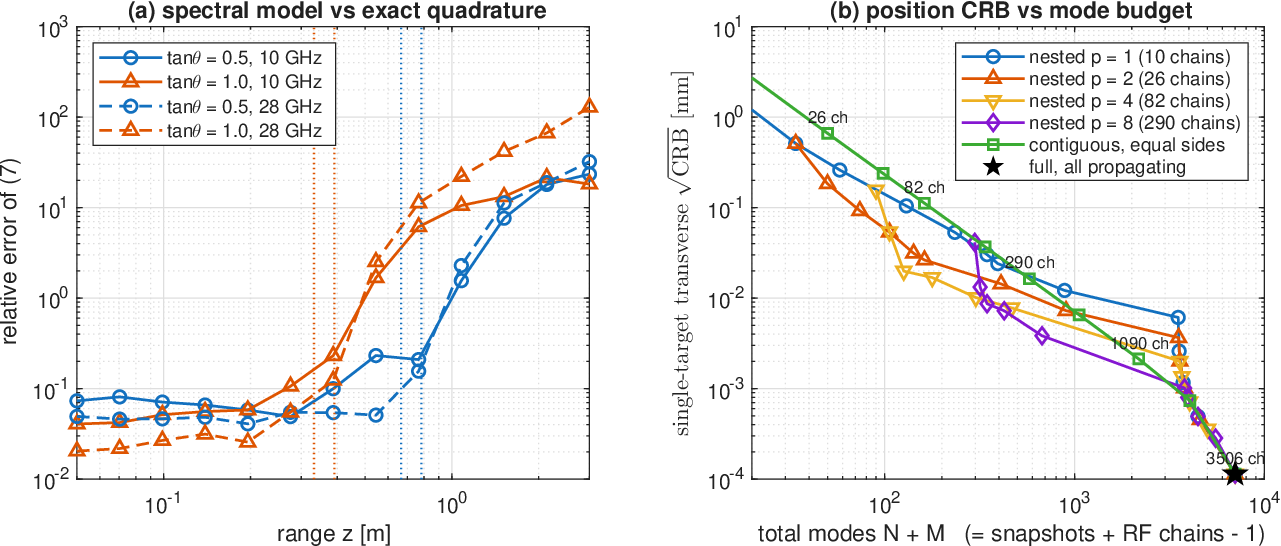}
\caption{(a) Relative error of \eqref{eq:spectral} against exact
quadrature, broadside target, dotted lines marking $z^{\star}$ from \eqref{eq:validity}.
(b) Single-target transverse root CRB against mode budget at fixed noise,
from \eqref{eq:fishlat}.}
\label{fig:model}
\end{figure}

\begin{figure}[!t]
\centering
\includegraphics[width=\columnwidth]{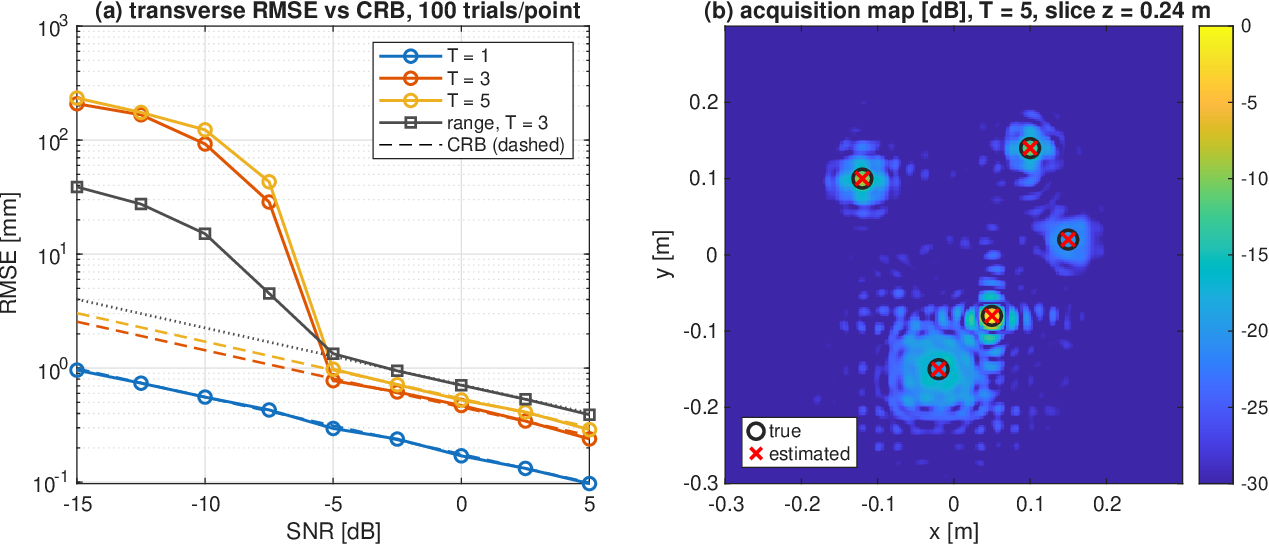}
\caption{Estimation on the $26$-chain design. (a) Transverse RMSE against
the CRB \eqref{eq:crlb}. (b) Whitened acquisition
map with true and estimated positions.}
\label{fig:est}
\end{figure}

\section{Acquisition and Refinement}\label{sec:est}
Given Lemma~\ref{lem:nested}, the estimator exploits \eqref{eq:vecmodel} in three stages.

\emph{Acquisition.} The matched filter
$\mathbf h^{\mathsf H}(\ub,z)\,\mathrm{vec}(\widehat{\mathbf H})
=\sum_{mn}c_{mn}^{*}(z)\widehat H_{mn}e^{-\jmath(\kb_n-\kb_m)\cdot\ub}$
is a 2-D Fourier sum over $\mathcal D$, one FFT per range slice. The matched weights $c_{mn}^{*}(z)$ maximize output SNR, while the
whitened weights $c_{mn}^{*}(z)/|c_{mn}(z)|^{2}$ equalize the
lattice weight \eqref{eq:weight}, removing the rim-heavy $1/(2k_z)$ taper
of \eqref{eq:weyl}. The refinement below is ML in either case. Per axis the largest
difference index is $\bar D=(2p{+}1)s+p$, so the surface is band-limited
with mainlobe half-width $\mathcal S_a/(2\bar D{+}1)$:
an FFT of length $2\bar D{+}1$ with twofold zero-padding lands a grid point within a
quarter cell of every peak, inside the refinement basin. In range, a
single carrier senses $z$ only through
$k_{m,z}+k_{n,z}\in[2\kappa\cos\theta_{\max},2\kappa]$, a coarse scale $\lambda/[2(1-\cos\theta_{\max})]$: a few slices cover the prior and the refinement recovers $z$.

\emph{Refinement.} With $\mathbf Z$ as in Sec.~\ref{sec:crlb}, evaluated at $\bm\gamma=\widehat{\bm\gamma}(\mathbf u)$. With residual
$\mathbf r(\mathbf u)\triangleq\mathbf P^{\perp}
\mathrm{vec}(\widehat{\mathbf H})$, Gauss--Newton with the projected Jacobian
$\mathbf J=-\mathbf P^{\perp}\mathbf Z$ \cite{golub1973varpro} updates
$\mathbf u\leftarrow\mathbf u+\bm\Delta$,
\begin{equation}
\Re\{\mathbf J^{\mathsf H}\mathbf J\}\,\bm\Delta
=-\Re\{\mathbf J^{\mathsf H}\mathbf r\},
\label{eq:gn}
\end{equation}
one projection per step. Started inside the mainlobe it converges in a
few steps, and $\Re\{\mathbf J^{\mathsf H}\mathbf
J\}$ coincides, up to the noise scale, with the Fisher matrix of
Sec.~\ref{sec:crlb}.

\emph{Multiple targets.} CLEAN-type deflation with joint refits: refine the strongest peak via \eqref{eq:gn}, subtract
$\Hc\widehat{\bm\gamma}$ from $\mathrm{vec}(\widehat{\mathbf H})$, reapply
the filter to the residual, and finish with a joint refinement over all
$3L$ coordinates.

\section{Numerical Results}\label{sec:num}
All experiments use a square aperture of side $\mathcal S_x=\mathcal S_y=1$\,m at $f_0=10$\,GHz
($\lambda=3$\,cm, $33\lambda$ per side), and a common tangential polarization
$\mathbf p_n=\mathbf w_m=\hat{\mathbf x}$. The
reference design is the nested selection of Lemma~\ref{lem:nested} with
$(p,s)=(2,5)$: $5^2{=}25$ dense receive modes acquired simultaneously and
$11^2{=}121$ coarse transmit modes probed sequentially
($\mathbf Q=\mathbf I_N$, $c=1$), of which four evanescent corner modes are
discarded, giving $I=N=117$ snapshots, $26$ RF chains, and a $55\times55$
difference lattice with transverse cell
$\mathcal S_x/55\approx18$\,mm, hole-free up to the four
corner blocks removed because their modes are evanescent ($2925$ cells).
Acquisition uses whitened weights, twofold zero padding, range slices spaced $20$\,mm over $[0.18,0.32]$\,m, refinement \eqref{eq:gn}, and SNR is per decoded entry.

Fig.~\ref{fig:model}(a) probes Proposition~\ref{prop:valid}: the relative
error of \eqref{eq:spectral} against exact quadrature of \eqref{eq:alphab}
for a broadside target as a function of range, for walk-off slopes
$\tan\theta_n\in\{0.5,1\}$ at two carriers. The error ramps around the ranges $z^{\star}(\theta_n,f_0)$ solving \eqref{eq:validity} with equality (a leading-order marker). The $28$\,GHz knees sit
only slightly beyond the $10$\,GHz ones, since the walk-off term of
\eqref{eq:validity} is purely geometric and only the Fresnel radius
shrinks with the carrier, while the few-percent floor is the edge-diffraction residue of order $(\kappa z)^{-1/2}$ and drops with the carrier accordingly. Beyond $z^{\star}$ the error exceeds unity,
an amplitude divergence, which is why every mode selection must respect
the cone of Remark~\ref{rem:near}.

Fig.~\ref{fig:model}(b) turns Lemma~\ref{lem:fisherlat} into a design
chart: the single-target transverse root CRB at $z=0.25$\,m and a noise level fixed so that the reference design operates at $15$\,dB SNR per decoded entry, evaluated through
\eqref{eq:fishlat} directly on the index sets. The abscissa counts modes, is multiplexing-invariant by Lemma~\ref{lem:code}, and under sequential probing equals snapshots plus chains minus one. Each nested curve fixes its receive hardware, $p\in\{1,2,4,8\}$ ($10$ to $290$ chains), and grows in three regimes, coarse factor $s$, transmit stride refinement (both paid in snapshots alone), then receive growth, terminating at the all-propagating system (star), while the contiguous family grows both sides throughout. The chart quantifies the trade of Sec.~\ref{sec:design}. At equal budget the nested designs lie below, $0.026$\,mm at ${\approx}160$ modes and $26$ chains against $0.112$\,mm at $82$ chains, a factor $4.3$: spreading the pairs maximizes $\mathrm{Cov}_\omega$, while multiplicity re-measures the same differences. Prolonged by snapshots alone, the $26$-chain curve reaches
${\approx}4\,\mu$m with the transmit side occupying the whole disc, within
a small factor of the full system: Remark~\ref{rem:resources} in a single
point. Multiplicity buys the taper, a $-26$\,dB floor for the triangular contiguous $w$ against $-13$\,dB for $w\equiv1$ (Lemma~\ref{lem:nested}): a soft advantage, since sidelobes below the dynamic range are removed by deflation, whereas an unexplored difference no processing can recover.

Fig.~\ref{fig:est} validates the estimator on the reference design for
$L\in\{1,3,5\}$ targets drawn from
$(0.05,-0.08,0.24)$, $(-0.12,0.10,0.22)$, $(0.15,0.02,0.28)$,
$(-0.02,-0.15,0.30)$, $(0.10,0.14,0.26)$\,m, reflectivity magnitudes spanning $10$\,dB with random phases, $100$ trials per SNR point, everything else fixed. Above an $L$-dependent threshold
the transverse RMSE attains the bound \eqref{eq:crlb} within the sampling error of the finite trial count, while below it acquisition
outliers dominate and the threshold migrates right with $L$ as deflation
must separate progressively weaker returns. Slight excursions below the bound are expected, from the finite trial count and from the small finite-SNR bias of the ML estimator, to which the unbiased bound does not apply. The persistent gap
between range and transverse accuracy (shown for $L=3$) is the
single-carrier imbalance of Sec.~\ref{sec:est}, range information entering
only through $k_{m,z}+k_{n,z}$. In the $L=5$ realization of Fig.~\ref{fig:est}(b), all targets are acquired within one cell despite the $-13$\,dB floor: deflation absorbs the sidelobe cost of $w\equiv1$.

\section{Conclusion}
Single-snapshot holographic probing of point targets is rank-limited to $3L$.
Invertible transmit coding lifts the limit and places the decoded data on a
wavenumber-domain difference lattice. The specular-point condition bounds
the modal support of the spectral model, and the Fisher matrix reduces to a
covariance over the lattice, turning mode selection into a variance
maximization whose hole-free optimum is the nested design: full-aperture resolution from tens of RF chains, since transmit modes cost time rather than hardware. The identifiability analysis quantifies the target capacity gained. Wideband operation, addressing the
range and cross-range information imbalance of single-carrier sensing, is
the natural extension.

\appendices
\section{Proof of Proposition~\ref{prop:valid}}\label{app:prop1}
\begin{IEEEproof}
Write $\mathbf G(\mathbf u_\ell,\mathbf s)=\mathbf D(\mathbf u_\ell-\mathbf s)\,
e^{\jmath\kappa R(\sbb)}$, $R(\sbb)=\sqrt{z_\ell^2+\|\sbb-\ub_\ell\|^2}$. The integrand is a slow amplitude times
$e^{\jmath\Phi(\sbb)}$, $\Phi=\kappa R+\kb_n\cdot\sbb$, oscillating on
the scale $\lambda\ll R$, so the integral is dominated by the stationary
set of $\Phi$. Stationarity requires $\kappa(\sbb-\ub_\ell)/R=-\kb_n$,
whence $R=z_\ell/\cos\theta_n$ and the unique point
$\sbb^{\star}_{n,\ell}=\ub_\ell-z_\ell\kb_n/k_{n,z}$, from which a ray along
$+\mathbf k_n$ reaches the target. Around it the linear term vanishes and
$\Phi(\sbb^{\star}{+}\bm\nu)=\Phi(\sbb^{\star})
+\tfrac12\bm\nu^{\mathsf T}\nabla^2\Phi\,\bm\nu+\dots$, with
$\nabla^2\Phi=\kappa\big[\mathbf I_2/R-(\sbb-\ub_\ell)(\sbb-\ub_\ell)^{\mathsf T}
/R^3\big]$ of eigenvalues $\kappa\cos^3\theta_n/z_\ell$ and
$\kappa\cos\theta_n/z_\ell$. Points with quadratic phase below $\pi$ add coherently, beyond them
successive Fresnel annuli cancel pairwise. Setting
$\tfrac12(\kappa\cos^3\theta_n/z_\ell)r^2=\pi$ along the weaker axis gives the
first-zone radius $r_F=\sqrt{\lambda z_\ell/\cos^3\theta_n}$, and the plane integral is dominated by this disc. Since
$\int_{\mathcal S}=\int_{\mathbb R^2}-\int_{\mathbb R^2\setminus\mathcal
S}$, if \eqref{eq:validity} holds the removed region contains no
stationary point, integration by parts along $\nabla\Phi$ suppresses the
oscillatory remainder, and the surviving edge-diffraction term is smaller
than the disc contribution by $\mathcal O((\kappa z_\ell)^{-1/2})$. If it
fails, the truncation removes an $\mathcal O(1)$ fraction of the dominant
term.
\end{IEEEproof}

\section{Proof of Proposition~\ref{prop:ident}}\label{app:prop2}
\begin{IEEEproof}
Since $\mathbf x^{\mathsf T}\mathbf F_{u|\gamma}\mathbf x
=\tfrac{2c}{\sigma^2}\|\Zp\mathbf x\|^2$, singularity is equivalent to the
existence of a real $\mathbf x\neq\mathbf 0$ and $\mathbf c\in\mathbb C^L$
with $\mathbf Z\mathbf x=\Hc\mathbf c$, in which case
$\bm\mu(\mathbf u{+}\epsilon\mathbf x,\bm\gamma{-}\epsilon\mathbf c)
=\bm\mu+\mathcal O(\epsilon^2)$ by \eqref{eq:sens}: displaced targets are
masked by re-fitted reflectivities. Note that $\mathbf x$ is real while
$\mathbf c$ is complex.

(i) In $\mathbb R^{2MN}$, $\{\mathbf Z\mathbf x\}$ spans a subspace
of dimension $\le3L$ and $\{\Hc\mathbf c\}$ one of dimension $2L$
(directions $\mathbf h_\ell$, $\jmath\mathbf h_\ell$). If $5L>2MN$ they
intersect nontrivially and a masking pair exists.

(ii) If $\mathrm{rank}_{\mathbb C}(\Zp)=3L$, the columns of $\Zp$ are
independent over $\mathbb C$, hence over $\mathbb R$, so
$\Zp\mathbf x\neq\mathbf 0$ for every real $\mathbf x\neq\mathbf 0$. Since
$\mathbf P^{\perp}$ projects onto a subspace of complex dimension $MN-L$,
rank $3L$ requires $MN\ge4L$.

(iii) For $L=1$, \eqref{eq:grad} gives $[\mathbf Z\mathbf x]_{(m,n)}
=\gamma\jmath\big[(\kb_n-\kb_m)\cdot\bar{\mathbf x}
+(k_{m,z}+k_{n,z})x_3\big][\mathbf h]_{(m,n)}$ and
$[\Hc c]_{(m,n)}=c\,[\mathbf h]_{(m,n)}$, with all entries of $\mathbf h$
nonzero, so $\mathbf Z\mathbf x=\Hc c$ iff
$(\kb_n-\kb_m)\cdot\bar{\mathbf x}+(k_{m,z}+k_{n,z})x_3=\mu$ for every
pair, $\mu\triangleq c/(\jmath\gamma)$ forced real, i.e., iff
$\mathbf V[\mathbf x^{\mathsf T},-\mu]^{\mathsf T}=\mathbf 0$, where any
nontrivial null vector has $\mathbf x\neq\mathbf 0$: singularity is
$\mathrm{rank}(\mathbf V)<4$. Each column of $\mathbf V$ is additive, and such functions span a space of dimension $M{+}N{-}1$,
capping the rank. A common radius makes $k_{m,z}+k_{n,z}$ constant and
collapses the third column onto the fourth.
\end{IEEEproof}

\bibliographystyle{IEEEtran}
\bibliography{refs}

\end{document}

%% file: acronyms.tex
\acrodef{CRB}{Cram\'er--Rao bound}
\acrodef{RF}{radio frequency}
\acrodef{SNR}{signal-to-noise ratio}
\acrodef{MIMO}{multiple-input multiple-output}
\acrodef{ML}{maximum likelihood}
\acrodef{FFT}{fast Fourier transform}
\acrodef{DFT}{discrete Fourier transform}
\acrodef{CAPA}{continuous-aperture array}